\DeclarePairedDelimiter\floor{\lfloor}{\rfloor}
\newtheorem{theorem}{Theorem}[section]
\begin{document}


\title{
CompactChain:An Efficient Stateless Chain for UTXO-model Blockchain
}

\author{B Swaroopa Reddy and T Uday Kiran Reddy
\thanks{
Both the authors are with Department of Electrical Engineering, Indian Institute of Technology Hyderabad, Hyderabad, Telangana 502285, India
}
}
\markboth{This work is accepted at the journal of Springer 
Frontiers of Computer Science.}%
{}

\date{\today}

\maketitle

\begin{abstract}
In this work, we propose a stateless blockchain called CompactChain, which compacts the entire \textit{state} of the UTXO (Unspent Transaction Output) based blockchain systems into two RSA accumulators. The first accumulator is called Transaction Output (TXO) commitment which represents the \textit{TXO set}. The second one is called Spent Transaction Output (STXO) commitment which represents the \textit{STXO set}. In this work, we discuss three algorithms - $(i)$ To update the TXO and STXO commitments by the miner. The miner also provides the proofs for the correctness of the updated commitments;
$(ii)$ To prove the transaction's validity by providing a membership witness in TXO commitment and non-membership witness against STXO commitment for a coin being spent by a user;
$(iii)$ To update the witness for the coin that is not yet spent; 
The experimental results evaluate the performance of the CompactChain in terms of time taken by a miner to update the commitments and time taken by a validator to verify the commitments and validate the transactions. We compare the performance of CompactChain with the existing state-of-art works on stateless blockchains.  CompactChain shows a reduction in commitments update complexity and transaction witness size which inturn reduces the mempool size and propagation latency without compromising the system throughput (Transactions per second (TPS)).\\ 
\end{abstract}
\section{Introduction}
Blockchain has emerged as a  decentralized and trustless technology for both cryptocurrencies and smart contract applications. An anonymous person named  Satoshi Nakamoto introduced Bitcoin \cite{bitcoin} as a Peer-to-Peer  ($P2P$) network that records an append-only immutable ledger using cryptographic digital signatures  and hash functions. Blockchain-based smart contract platforms  like Ethereum \cite{ethereum}, Hyper ledger fabric \cite{hyperledger} open the applications of blockchain in many areas like Internet-of-Things (IoT) \cite{IoT}, supply chain management \cite{supplychain}, healthcare \cite{medrec}, agriculture  \cite{agriculture}, energy trading \cite{energy}, etc. 

A transaction is a fundamental entity in the blockchain ledger. It is defined as transferring the ownership of a coin from one party
to the other through digital signatures \cite{bitcoin}. In a transaction, the destinations of ownership transfer are called \textit{outputs}, and the sources of ownership are called \textit{inputs}. A subset of previous \textit{outputs} is spent as \textit{inputs} of the transaction. The transactions contain multiple \textit{inputs} and \textit{outputs}. The set of the Unspent Transaction Outputs
(UTXO) is called the \textit{state} of a blockchain.
The \textit{state} changes dynamically as a subset of previous \textit{outputs}  (or \textit{inputs}) are being spent, and create new \textit{outputs}.

The blockchain network is jointly maintained by two different types of nodes - miners, and users. The \textit{miners} are the special nodes who invest their computational power to solve the Proof-of-Work (PoW) \cite{bitcoin} puzzle to create immutable blocks.
The \textit{full nodes} \cite{bitcoin_full_node} and \textit{miners} in the network act as \textit{validators}. The validators store the complete UTXO set (state) \cite{UTXO} to verify the validity of blocks and transactions.  So, bitcoin is an example of a UTXO based stateful blockchain. The full nodes store the UTXO set in the chainstate directory, a level dB database of the bitcoin core \cite{bitcoin_github}. The validators also store the complete blocks from the genesis to the present block and help new nodes joining the network to synchronize with the blockchain's current state.

The blockchain \cite{bitcoin_charts} and UTXO set \cite{bitcoin_charts} sizes are ever-growing and increase the storage burden on validators and reduce transaction verification performance. Consequently, some users run a Simplified Payment Verification (SPV) node or light client \cite{bitcoin}. The SPV nodes  rely on the full nodes to verify the transaction by querying the Merkle path from the transaction to the Merkle root listed in the block header. Further increase in the size of the UTXO set might exceed the RAM capacity of the validator leads to an increase in the count of the SPV nodes compared to the validator nodes, which underpins the decentralization property of the blockchain system. 

Accumulator is a cryptographic primitive that produces a short binding commitment to a set of elements together with short membership/non-membership proofs for any element in the set. The membership witness is a single element, which allows us to prove the fact that a given element is included in the accumulator. Similarly, the non-membership witness allows us to prove the fact that a given element is not a part of the accumulator. The RSA accumulator \cite{accumulator, dynamic_acc, universal_acc} is an accumulator based on strong RSA assumption. 

In \cite{batching}, the authors propose a stateless blockchain for UTXO-model and account-model blockchains  using the batching techniques on trapdoor-less RSA accumulator \cite{dynamic_acc}, \cite{universal_acc}, \cite{Tremel}. In the UTXO-model framework, the block contains a commitment to the latest UTXO set called the \textit{accumulator state} or \textit{UTXO commitment}. The accumulator state is verified by the validators using Non-Interactive Proof of Exponentiation (NI-PoE). The users provide the membership witness for their coins being spent in transaction payloads. The miners and fullnodes  verify the transactions by checking the membership witnesses against the latest accumulator state. 
This framework uses batching techniques on RSA accumulators, namely \textit{batchDel} and \textit{batchAdd} to update the  commitment for every new block creation. The \textit{batchDel} operation deletes the spent coins or inputs of the transaction from the UTXO commitment. The \textit{batchAdd} operation adds the new coins or outputs to the UTXO commitment. However, the \textit{batchDel} operation aggregates membership witnesses for all inputs of the block into a single membership witness using \textit{shamirTrick}. The \textit{batchDel} is a sequential iterative operation with a complexity of  $\mathcal{O}(m^2)$. Where $m$ is the number of transactions (assuming single input per transaction). Moreover, it is not possible to exploit parallelism in the computation.  Since the complexity of \textit{batchDel} operation is quadratic in time, the efficiency drastically reduces with an increase in the total  number of inputs in a block. It severely affects the transaction throughput (TPS). Also, users need to rely on the service providers to update their membership witnesses with every block of transactions.

Minichain protocol  proposed in \cite{minichain} is  a light-weight stateless blockchain.  Minichain consisting of two commitments - STXO (Spent Transaction Output) commitment, and TXO (Transaction Output) commitment instead of a single UTXO commitment to avoid more complex \textit{batchDel} operation. The STXO commitment is an RSA accumulator to all the spent coins or inputs (called STXO set). The TXO commitment is based on the Merkle Mountain Range (MMR) \cite{petertodd} to outputs (called TXO set). The validators in Minichain also use the NI-PoE to verify the STXO commitment. While spending a coin, the user provides a non-membership witness (Unspent proof) against STXO commitment and a membership witness (Existence proof) in TXO commitment.  The existence proof consists of two Merkle proofs. First, the \textit{coin inclusion proof} to prove the coin is committed to the TMR (Transaction Merkle Root) of the block where the coin is generated. Second, the \textit{TMR inclusion proof} to prove that the TMR is committed to the latest TXO commitment. However, the complexity of the coin inclusion proof and TMR inclusion proof sizes are $\mathcal{O}(log_2 (m))$ and $\mathcal{O}(log_2 (L))$. Where $L$ is the length of the blockchain from the genesis block.  $L$ is ever-growing and increases network communication latency due to the large membership proof size, which further affects the TPS. This impact is huge in the blockchain networks with high block creation rate.

In this work, we propose a compact stateless blockchain for UTXO-model blockchain by compressing the entire \textit{state} of the blockchain into two RSA accumulators, one each for TXO commitment and STXO commitment. The miner updates the commitments by using two \textit{batchAdd} operations on inputs and outputs. The validators verify both commitments by checking the  NI-PoE proofs. The users provide transaction proof consisting of membership witness for a coin in TXO commitment and a non-membership witness against the STXO commitment. 
The membership witness is also an accumulator excluding the particular coin whose membership is to be proved. Consequently, membership proof is of constant size. The user needs to update the membership and non-membership proofs for their coins with every new block generation to make the proofs compatible with the latest commitments. The validator can efficiently verify the transaction proofs through the RSA group operations.

The main contributions of this work are the following - 
\begin{itemize}
\item We propose an RSA accumulator for the TXO commitment for a constant sized transaction membership proof in contrast to the ever-growing existence proof size in minichain protocol.

\item We implement CompactChain and compare the performance with Boneh's\footnote{We use the term Boneh's protocol for the stateless blockchain proposed in \cite{batching}.} and Minichain protocols. Comparing to Boneh, CompactChain has improved the efficiency of commitment update from $\mathcal{O}(m^2)$ to $\mathcal{O}(m)$. Comparing to Minichain, the transaction proof size has improved from $\mathcal{O}(log_2 (m)) + \mathcal{O}(log_2 (L)) + \mathcal{O}(1)$ to $\mathcal{O}(1)$.

\item Through simulation results, we show the performance improvement in network communication latency and TPS compared to Minichain due to reduced transaction proof size.
\end{itemize}

The rest of the paper is organized as follows - We discuss the related work in section 2. Section 2 provides the  preliminaries. In section 4, we present the system architecture. In section 5, we discuss the design of CompactChain protocol. Section 6 demonstrates the performance evaluation of CompactChain in comparison with Boneh's and Minichain protocols. In section 7, we conclude our work and discuss the future directions of research. 
\section{Related Work}
The distributed coding theory techniques like erasure-codes \cite{low-storage} and fountain codes \cite{SeF} have been proposed for storage efficiency of the blockchain node
by reducing the storage cost and still contribute to bootstrap a new node joining the network. In Dynamic distributed storage system \cite{dynamic}, the blockchain nodes are allocated into dynamic zones and the nodes in each zone store a share of private keys using Shamir's secret-sharing \cite{shamir} for encrypting the block data and apply a distributed storage codes such as \cite{exact-regeneration}, \cite{optimal-LRC} for reducing the storage cost. 

A snapshot-based block pruning technique has been proposed in \cite{coinprune} to prune archived blocks by creating a \textit{snapshot} of the state at regular intervals. A snapshot-based consensus protocol for bitcoin-like blockchains has been proposed in \cite{Rollerchain}, where the miners create a block by providing a non-interactive proofs of storing a subset of the past state snapshots. However, these techniques solve the problem of storing the  historical blocks.

A stateless client concept proposed in \cite{vitalik} for Ethereum blockchain, where full nodes only store a state root and miners broadcast witness (a set of the Merkle branches proving the data values in block) along with the block. The validators download and verify these expanded blocks. In \cite{petertodd}, the author proposes  low-latency delayed TXO commitments based on Merkle Mountain Range (MMR) for committing to the state of all transaction outputs. In order to append a new output requires  fewer storage requirements ($log_2 (n)$ mountain tips). The added output could not be removed from the MMR, instead updates the status of the spent  output. While spending a coin, each transaction would be accompanied by a Merkle proof consisting of a Merkle path to the tip of a tree such that the outputs being spent were still unspent. 

EDRAX \cite{EDRAX} proposes a stateless transaction validation for UTXO based blockchain using the Sparse Merkel Tree (SMT) \cite{SMT}. The coin which is being spent also includes a witness of unspent. However, the witness for a transaction being spent depends on the Merkle proof of size $log_2 (m)$.

In \cite{batching}, the authors proposed a stateless blockchain scheme based on the batching and aggregating techniques for the accumulators of unknown group order. In this scheme, each block contains an accumulator, which represents a commitment to the current UTXO set. This commitment is constructed by leveraging the batching techniques for membership and non-membership proofs for the set of transactions included in the block. However, the complexity of this commitment update is of  $\mathcal{O}(m^2)$.  Where $m$ is the number of transactions (assuming single input per transaction). The efficincy in computing the commitment reduces with the increase in number of average transactions per block.
\section{Preliminaries}\label{sec3}
\subsection{State of an UTXO based Blockchain}
In a UTXO-model blockchain design, the \textit{state} or \textit{UTXO set} \cite{UTXO} is a collection of the Transaction Outputs (TXO) that are unspent at a particular moment of time. Whenever a new transaction is created, it consumes a subset of the current UTXO set through \textit{inputs} and creates new UTXOs through \textit{outputs}. A validator in a stateful blockchain needs to keep a copy of the \textit{state} as a \textit{chainstate} database to verify the transaction's validity. The validators are required to update the \textit{state} with a block of transactions by deleting the UTXOs associated with inputs from the \textit{UTXO set} and adding new outputs to the \textit{UTXO set}.

Let $S_n$ be the \textit{state} of a blockchain until block height $n$ and $\bullet$ represents a state transition function. When a new block $B_{n+1}$ is created with a set of transactions included in it, then the \textit{state} of the blockchain is
\begin{align}
S_{n+1} &= S_n \bullet B_{n+1} \nonumber \\
&=(\dots ((S_{0} \bullet B_{1}) \bullet B_{2}) \bullet \dots  \bullet B_{n+1})
\end{align} 
\subsection{Stateless Blockchain}
In contrast to the stateful blockchain, in stateless blockchain \cite{batching, minichain} a cryptographic commitment like RSA accumulator \cite{batching} to the \textit{UTXO set} is stored in every block header. The validators no longer need to store the complete \textit{UTXO set} in their storage to verify the transactions, instead, the witness provided by each user for their transactions are used to verify the transaction's validity against the latest commitment to the \textit{state}. Any miner update the commitment from inputs and outputs of the transactions in the block and output a proof of correctness. The miner propagate the proof to other nodes in the network.

Let $C_n$ be a commitment to the \textit{UTXO set} in block $B_n$ and $\diamond$ be the commitment update function, then the new commitment for the block at height $n+1$ is obtained as
\begin{equation}
C_{n+1} = C_{n} \diamond B_{n+1}
\end{equation}

The steps to run the network in a stateless blockchain are as follows - 
\begin{enumerate}
\item Users broadcast transactions along with corresponding witnesses to all the nodes.
\item Miners collects new transactions into the block and update the commitment concerning the commitment of the previous block. It also outputs a proof for correctness of the commitment.
\item Each miner works on finding the Proof-of-Work (PoW) for its block.
\item When a miner finds a PoW, it broadcasts the block along with the witnesses to the transactions in the block to other nodes in the network.
\item Validators accept the block only if all the transactions in it are valid based on the transaction witnesses and the proof of correctness to the commitment is verified. 
\item Nodes express their acceptance of the block by working on creating the next block in the
chain.
\item The owners of the unspent coins update their transaction proofs based on the newly accepted block of transactions in the chain.
\end{enumerate}

\subsection{RSA Accumulator}
\subsubsection{Cryptographic Assumptions}
An RSA accumulator \cite{accumulator}, \cite{Tremel} is a cryptographic primitive that generates a short commitment to a set of elements with efficient membership and non-membership proofs for any element of the set. 

The RSA accumulator requires the generation of a  group of unknown order in which strong RSA and root assumption \cite{batching} holds, and it is based on the modular exponentiation with an RSA modulus. 

Let $\mathbb{G}$ be a group of unknown order, $g \in \mathbb{G}$ be a generator of the group and $N$ be an RSA modulus such that $N=pq$, where $p$ and $q$ are strong primes \cite{strongprimes}. 

Let \textit{GGen} is an algorithm that generates the above public parameters and $\mu(\lambda)$ is a negligible function in the security parameter $\lambda$.\\
\textbf{ Strong RSA Assumption} \cite{accumulator, dynamic_acc, batching}: Given RSA modulus $N$ (of size $\lambda$)) and a random $y \in \mathbb{Z}_N$, then it is computationally hard to find $x \in \mathbb{Z}_N$ and $l > 1$ such that $x^l \equiv y$ mod $N$. i.e., for all probabilistic polynomial time (PPT) adversary $\mathcal{A}$
\begin{align*}
Pr
\begin{bmatrix}  & \mathbb{G} \leftarrow GGen(1^{\lambda}), g \in \mathbb{G};\\ x^l = y : & \\ & (x, l) \in \mathbb{G} \times \mathbb{Z}_N \leftarrow \mathcal{A}(\mathbb{G},g)
\end{bmatrix}
\leq \mu(\lambda) 
\end{align*}
The elements to be accumulated must be prime numbers in order to be  collision-free \cite{accumulator} under strong RSA assumption.  Let $S = \{e_1, e_2, \dots, e_m \}$ be a set of elements to be accumulated, then the accumulator of elements of $S$ is
\begin{equation}
A = g^{x_1 . x_2. \dots . x_m}
\end{equation}
where,
\begin{equation}
x_i = H_{prime} (e_i)
\end{equation}
Where, $H_{prime}$ a random hash to prime function \cite{batching}.

The dynamic accumulator \cite{dynamic_acc} supports the addition or deletion of the elements to the accumulator. 

Let $S_{new} =  \{e_{m+1}, e_{m+2}, \dots e_n \}$ be a set of new elements to be added, then the batch addition of set $S_{new}$ update the accumulator to $A_{new} \leftarrow A^{x^{*}}$, where $x^{*} = x_{m+1} . x_{m+2}. \dots . x_n$.
\subsubsection{Membership and Non-membership witness}
The function  membership witness $w_i$ \cite{universal_acc},{batching} of an element $e_i \in S$ is simply an accumulator without the element $e_i$
\begin{equation}
w_i = g^{\prod_{j = 1, j \neq i}^m x_j} 
\end{equation}
The membership of $e_i$ in $A$ is verified by checking
\begin{equation}
(w_i)^{x_i} \stackrel{?}{=} A  
\label{mem}
\end{equation}
Let $e \notin S$ and $x = H_{prime}(e)$, then non-membership witness $u_x$ of $e$ in $S$ uses the fact that $gcd\left(x, x^{*} = \prod_{i = 1}^{m} x_i \right) = 1$. The Bezout coefficients  $a$ and $b$ such that $ax + bx^{*} = 1$ gives the non-membership witness for $e$ is
\begin{equation}
u_x = (d, b) = (g^a, b)
\end{equation}
The non-membership witness is verified by checking
\begin{equation}
    d^x A^b \stackrel{?}{=} g
    \label{nonmem}
\end{equation}
\subsubsection{Accumulator Security (Undeniability)}
\label{sec}
The universal accumulator \cite{universal_acc} is a dynamic accumulator with efficient update of the membership and non-membership proofs. We use the trapdoor less universal RSA accumulator \cite{batching}, \cite{secure_acc}, as there is no single trusted manager exists in distributed blockchain system and updates are processed in batches.

Let generator $g \in \mathbb{G}$ and $\mu(\lambda)$ is a negligible function. The functions \textbf{VerMemWit} and \textbf{VerNonMemWit} verify the membership and non-membership witnesses as per the equations \eqref{mem} and \eqref{nonmem} respectively.
A dynamic universal accumulator is secure \cite{universal_acc}, \cite{secure_acc}, \cite{batching} if, for all probabilistic polynomial time (PPT) adversary $\mathcal{A}$,
\begin{align*}
Pr
\begin{bmatrix} \mathbb{G} \leftarrow GGen(1^{\lambda}); g \in \mathbb{G}; (A,x,w_x,u_x)\leftarrow \mathcal{A}(pp, g); \\ \textbf{VerMemWit}(x, w_x, A) \wedge \textbf{VerNonMemWit}(x, u_x,A) 
\end{bmatrix} \\
= \mu(\lambda)
\label{eq:undeniability} 
\end{align*}
In other words, it is computationally infeasible to find both membership and non-membership proofs for an element $x \in S$. This statement is equivalent to it is computationally infeasible to find a non-membership witness for $x \in S$ or a membership witness for $x \not \in S$.
\subsubsection{Non-interactive proof of exponentiation (NI-PoE)}
Let $u, w \in \mathbb{G}$, the proof of exponentiation \cite{batching}, \cite{vdf} in the Group $\mathbb{G}$, when both the prover and verifier are given $(u,w, x \in \mathbb{Z})$ as inputs and prover wants to convince the verifier that $w=u^x$. Given base $u$, exponent $x$, and modulus $N$, the complexity of computing $u^x$ mod $N$ is $\mathcal{O}(log(x))$ \cite{mod-exp}. In PoE protocol, the verifier's work is much less than computing $u^x$ mod $N$, when $x \in \mathbb{Z}$ is much larger than 
$|\mathbb{G}|$. The PoE can be made non-interactive using the Fiat-Shamir heuristic \cite{Fiat-Shamir}.\\

\boxed{
\parbox[t]{3 in}{ \textbf{NI-PoE}  \cite{batching} \\
$.............................................................................$ \\
Public known $\{x, u, w:u^x = w \}$\\
$\#$ Prove NI-PoE
\begin{algorithmic}[1]  
\Procedure{\textit{ProveNI-PoE}}{$x$, $u$, $w$}
\State $l \gets H_{prime}(x, u, w)$ 
\State $q \gets \floor{\frac{x}{l}}$
\State return $Q \gets u^q$ 
\EndProcedure
\end{algorithmic}
\vspace{0.35cm}
$\#$ Verify NI-PoE
\begin{algorithmic}[1]
\Procedure{\textit{VerifyNI-PoE}}{$x$, $u$, $w$, $Q$}
\State $l \gets H_{prime}(x, u, w)$
\State $r \gets$ $x$ mod $l$
\State $Q^l u^r \stackrel{?}{=} w$
\EndProcedure
\end{algorithmic}
}}
\section{System Architecture}
\begin{table}[!h]
\caption{Parameters used in CompactChain}
\begin{center}
\begin{tabular}{ l l}
\hline
\textbf{Symbols}& \textbf{Description} \\
\hline
$B_n$ & Block at height $n$ \\
$T_n$ & Set of all transactions in $B_n$ \\
\textit{TXO set} & Set of all transaction outputs \\
\textit{STXO set} & Set of all spent transaction outputs \\
$TXO_n$ & Transaction outputs (TXOs) in $B_n$ \\
$STXO_n$  & Spent Transaction outputs (STXOs) in $B_n$ \\
$TXO_{k:n}$ & All TXOs from block $B_k$ to block $B_n$ \\
$STXO_{k:n}$ & All STXOs from block $B_k$ to block $B_n$ \\
$TXO\_C_n$ & Commitment to \textit{TXO set} in block $B_n$ \\
$STXO\_C_n$ & Commitment to \textit{STXO set} in block $B_n$ \\
$\Pi_{TXO}$ & NI-PoE proof for $TXO\_C_n$ \\
$\Pi_{STXO}$ & NI-PoE proof for $STXO\_C_n$ \\
$w_n(x)$ & Membership witness to prove $x \in$  \textit{TXO set} \\
$u_n(x)$ & Non-membership witness to prove  \\
& $x \not \in$ \textit{STXO set} \\
\hline
\end{tabular}
\label{table:symbols}
\end{center}
\end{table}
\begin{figure}[b]
  \includegraphics[width=1\columnwidth, height=0.6\columnwidth]{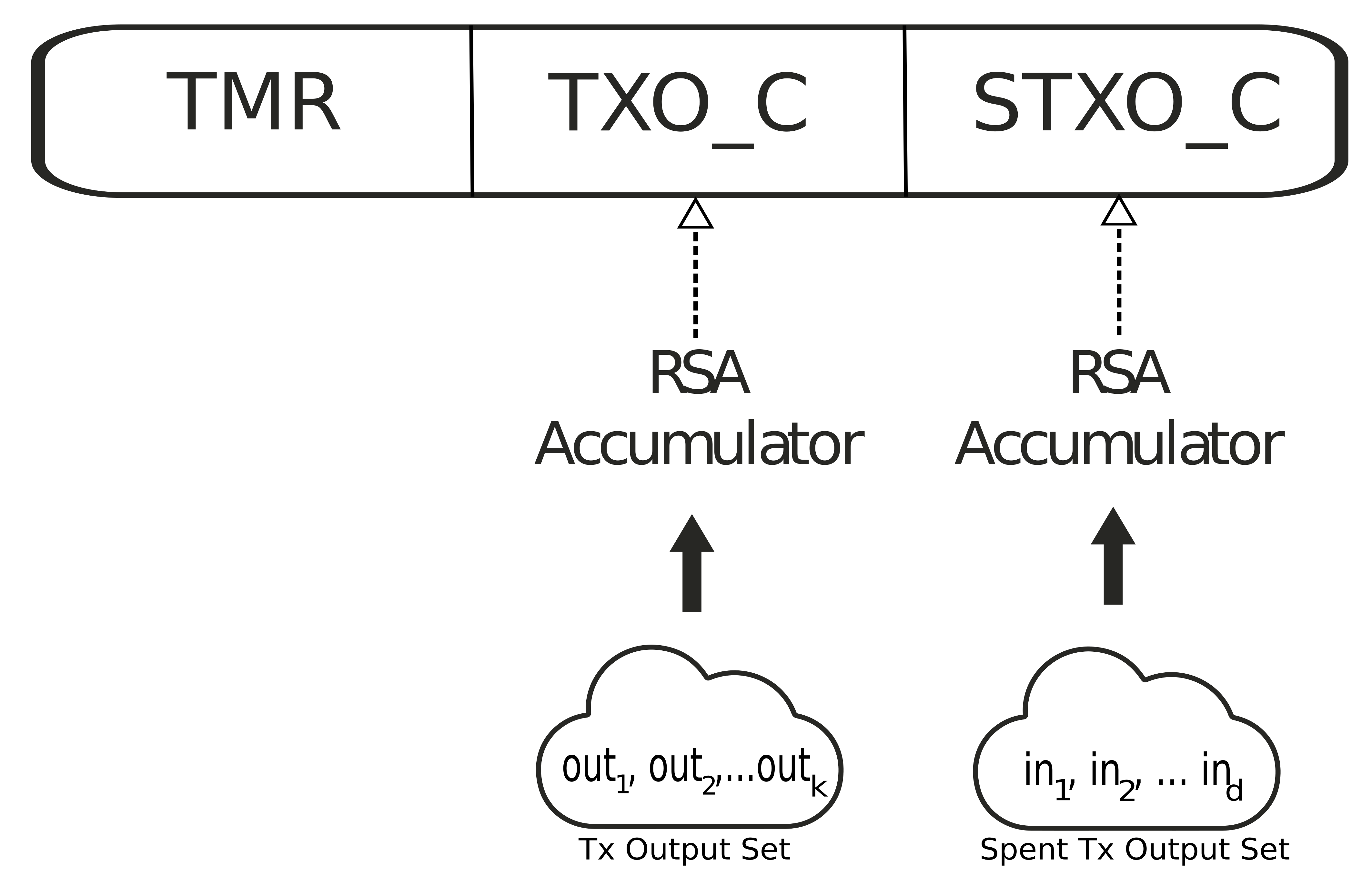}
   \caption{Block header composition mainly contains two RSA accumulators - $TXO\_C$ (TXO commitment) and $STXO\_C$ (STXO commitment)}
\label{fig:arch_a}
\end{figure}
\begin{figure*}[t]
  \includegraphics[width=1\linewidth, height=0.6\linewidth]{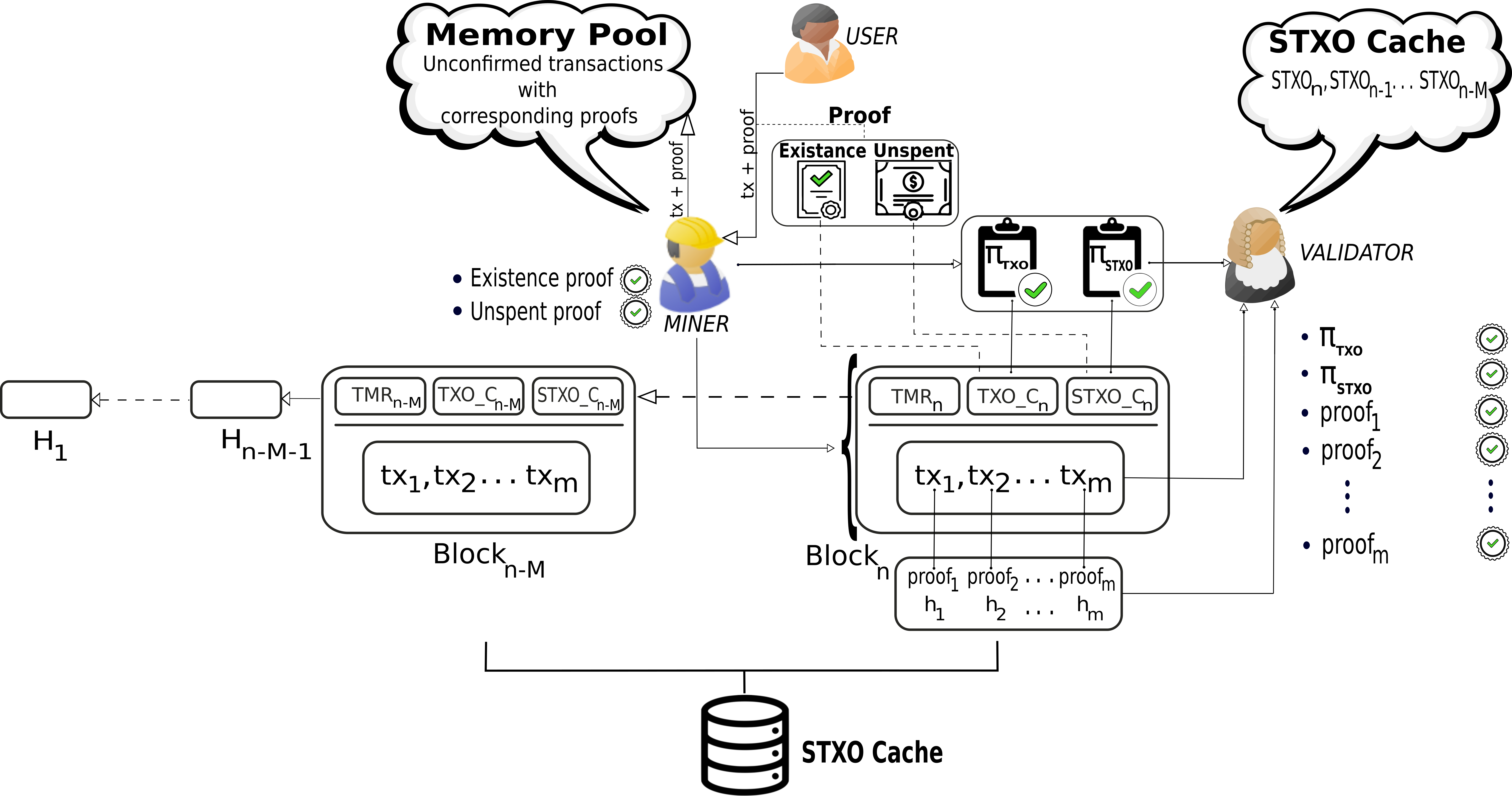}
   \caption{Block validation procedure}
\label{fig:arch_b}
\end{figure*}
We propose a stateless blockchain that is closely built on Boneh's work \cite{batching} and Minichain \cite{minichain}.  The parameters used in our framework are listed in Table \ref{table:symbols}.
We stick to Minichain's model with two commitments for TXO set and STXO set and address the following limitations of the Minichain protocol. 

First, the size of a transaction witness consisting of existence and unspent proofs  for a coin being spent depends on the length of the blockchain $L$ and the maximum number of transactions $m$ in the block with a complexity of $\mathcal{O}(log_2 (m)) + \mathcal{O}(log_2 (L)) + \mathcal{O}(1)$. The length of the chain $L$ is ever-growing, and if a coin is not spent for a sufficiently long duration, then the size of the witness is very large. Second, the information being propagated in the network consists of a block and witnesses for transactions associated with that block. Since the witness size depends on $L$, it affects the end-to-end propagation latency of a block and limits the TPS.

In this work, we propose the RSA accumulator for TXO commitment to address the above limitations. The architecture of the proposed stateless blockchain is shown in Fig.\ref{fig:arch_a}, \ref{fig:arch_b}. \\
\textbf{TXO Commitment:} To reduce the size of the existence proof of a coin in the Minichain protocol and further to improve the efficiency of the network communication, we propose replacing MMR-based TXO commitment with an RSA accumulator. We use batch addition to update the TXO commitment in every block by adding a set of new outputs in $T_n$ of the block $B_n$ to the TXO commitment in the previous block.
When a user wants to spend a coin, he provides a membership witness that specifies the coin included in the TXO commitment. The witness is a single RSA group element, which can be verified using single modular exponentiation.\\
\textbf{STXO Commitment:} The STXO commitment is also an RSA accumulator to represent the STXO set in the block header. The user provides a non-membership witness against the latest STXO commitment to prove that the coin is not yet spent. When a new block is created in the network, the user who did not spend the coin updates the non-membership witness using spent coins in the block.\\
\textbf{STXO cache and Witness height:} When the commitments of the blockchain changes before processing a transaction, the user needs to update the membership and non-membership proofs of the coin being spent, which is very unfriendly to the users. We use the STXO cache introduced by Minichain, which contains the STXOs of the latest $M$ blocks. The validator can still verify the non-membership proof is valid by simply querying the STXO cache without updating the proof to the latest STXO commitment by the user. 

We introduce the Witness height ($h$) to specify the height of the block till where the membership and non-membership proofs got updated so that the user need not update and resubmit the proofs with every new block creation (or change of the state) after submission of the transaction. While submitting proofs for a coin being spent user provides the witness height $h$. The validator can verify the membership and non-membership proofs against the TXO and STXO commitments of the block at a height equal to $h$ such that $(n-M) \leq h \leq n$. Where $n$ is the latest block height, and $M$ represents the length of the STXO cache.
\section{CompactChain Design}
In this section, we discuss the design of the CompactChain protocol - TXO commitment, STXO commitment, witness height, STXO cache, update commitments, transaction witness generation and verification and, finally, update witness. 
Let $\lambda$ be a security parameter and $TXO\_C_n$ and $STXO\_C_n$ commitments to TXO and STXO sets in $n^{th}$ block. The tuple $(w_n(x), u_n(x))$ denotes the membership and non-membership proofs of a coin $x$ in TXO and STXO sets computed/updated till the block at height $n$. For brevity, we have omitted mod $N$ from every modular exponentiation operation in the RSA group with modulus $N$. 
\subsection{Update TXO and STXO Commitments}
The block header in the proposed protocol contains two RSA accumulators called $TXO\_C$ and $STXO\_C$  to accumulate TXO and STXO sets. The RSA accumulator is cumulative. Thereby, the miner updates the commitments for every block using the commitments from the previous block. The functions to update TXO and STXO commitments are described in Algorithm \ref{alg:comm_update}.

The commitments update algorithm takes \textit{inputs}, and \textit{outputs} of a block are input parameters.  It returns updated commitments and proof of correctness for the updated commitments using the RSA accumulator's batching techniques.
For instance, $TXO\_C_{n-1}$ and $STXO\_C_{n-1}$ are the latest commitments to TXO and STXO sets. While generating a new block $B_{n}$, the miner computes the new accumulators $TXO\_C_{n}$ and $STXO\_C_{n}$. Firstly, the $UpdateTXO\_C$ function takes the TXOs (\textit{outputs} of all transactions) from $T_{n}$ and calculate the prime representatives of each \textit{output} in TXOs using $H_{prime}$ function and computes the product of all prime representatives denoted as $p$. Finally, the modular exponentiation of base $TXO\_C_{n-1}$ with exponent $p$ updates the TXO commitment to $TXO\_C_{n}$. 
\begin{algorithm}[!t]
    \caption{Commitments Update Algorithm}
        \label{alg:comm_update}
    $\#$System Initialization
    \begin{algorithmic}[1]
        \Procedure{$Setup$}{$\lambda$}
           \State $\mathbb{G} \gets GGEN(\lambda)$
           \State $ g \stackrel{\$}{\gets}
             \mathbb{G}$
           \State  $TXO\_C_0 \gets 
            g$
            \State $STXO\_C_0 \gets g$
            \State \textbf{return} $TXO\_C_0$, 
              $STXO\_C_0$
        \EndProcedure
    \end{algorithmic}
    \vspace{0.25cm}
    $\#$Update $TXO\_C$ function
    \begin{algorithmic}[1]
        \Procedure{$UpdateTXO\_C$}{$TXO\_C_{n-1}$,$T_{n}$}
           \State $p = 1$
           \For{output in $T_{n}.Outputs$}
           	 \State $p$ \hspace{0.05cm} $*=$ \hspace{0.05cm} $H_{prime}(output)$
           \EndFor
           \State $TXO\_C_{n} \gets \left(TXO
            \_C_{n-1}\right)^{p}$
          
           \State $\Pi_{TXO}$ $\gets$ 
           \textit{ProveNI-PoE}$
           	(TXO\_C_{n-1}, p,$ \\ \hspace{3cm} $TXO\_C_{n})$
           \State \textbf{return} $TXO\_C_{n}$, 
           		$\Pi_{TXO}$
        \EndProcedure
    \end{algorithmic}
\vspace{0.25cm}
    $\#$Update $STXO\_C$ function
    \begin{algorithmic}[1]
        \Procedure{$UpdateSTXO\_C$}{$STXO\_C_{n-1}$, 	
        			$T_{n}$}
           \State $p = 1$
           \For{input in $T_{n}.Inputs$}
           	 \State $p$ \hspace{0.05cm} $*=$ \hspace{0.05cm} $H_{prime}(input)$
           \EndFor
           \State $STXO\_C_{n} \gets \left(STXO
           	\_C_{n-1}\right)^{p}$
          
           \State $\Pi_{STXO} \gets 
           	$\textit{ProveNI-PoE}$
              \left(STXO\_C_{n-1}, p, STXO\_C_{n}\right)$
           \State \textbf{return} $STXO\_C_{n}
            $, $\Pi_{STXO}$
        \EndProcedure
    \end{algorithmic}
     \vspace{0.25cm}
    $\#$Verify updated commitments
    \begin{algorithmic}[1]
        \Procedure{$VerifyCom$}{$T_{n}, TXO\_C_{n-1}, STXO\_C_{n-1},$ $TXO\_C_{n},$ $STXO\_C_{n},  \Pi_{TXO}, \Pi_{STXO}$}
           \State $p_1 = 1$, $p_2 = 1$ 
           \For{output, input in $T_{n}$}
           	 \State $p_1$ \hspace{0.05cm} $*=$ \hspace{0.05cm} $ H_{prime}(output)$

           	 \State $p_2$ \hspace{0.05cm} $*=$ \hspace{0.05cm} $H_{prime}(input)$
           \EndFor
          \State $b_1 \gets $\textit{VerifyNI-PoE}$
          \left(TXO\_C_{n-1}, p_1, TXO\_C_{n
           	}, \Pi_{TXO}\right)$
          \State $b_2 \gets $\textit{VerifyNI-PoE}$ 
          (STXO\_C_{n-1}, p_2, STXO
           	\_C_{n},$ \\ \hspace{1.2cm} $\Pi_{STXO})$
          
           \State \textbf{return} $b_1 \wedge b_2$
        \EndProcedure
    \end{algorithmic} 
 \end{algorithm}
Similarly, the $UpdateSTXO\_C$ function takes the STXOs (\textit{inputs}) from $T_{n}$ as input and update the STXO commitment to $STXO\_C_{n}$.

The miner also generates  NI-PoE proofs $\Pi_{TXO}$ and $\Pi_{STXO}$ for both accumulators using the \textit{ProveNI-PoE}  function to show that the commitments are updated correctly.
The validators need not compute the updated commitments, instead, they can verify the proofs generated by miner using 
\textit{VerifyComm} function. The products calculated in the \textit{VerifyComm} function are much larger, and recalculating the updated commitments is inefficient to the validator. So, the \textit{VerifyNI-PoE} function used in the \textit{VerifyComm} improves the verification efficiency of the validator.
\subsection{Transaction witness Generation $\&$ Verification}
The TXO commitment combines all the generated coins (\textit{outputs}) and, the STXO commitment combines all the spent coins (\textit{inputs}).  
While spending a coin, the user submits a proof along with the transaction to show the validity of the coin. The transaction proof consisting of two parts - First, the proof for the coin's existence in TXO set and the unspent proof for the non-existence of the coin in the STXO set.

The existence proof indicates the coin is generated some time before and is a part of the TXO set represented by  TXO commitment. Similarly, the unspent proof indicates the coin is not yet spent such that it is not a member of the STXO set represented by STXO commitment. In other words, the user needs to generate  membership proof corresponding to the latest TXO commitment and  non-membership proof against the latest STXO commitment. We stick to a procedure that a user generates transaction witness using the TXOs and STXOs of the block where the coin is generated and apply subsequent witness update for every new block of transactions.
\begin{algorithm*}[!t]
    \caption{Witness Generation and Verification Algorithm}
        \label{alg:mem_nonmem}
    \begin{multicols}{2}
    $\#$Prove membership in $TXO\_C_k$ for $x_k$ generated 
    in $B_k$
    \begin{algorithmic}[1]
        \Procedure{$CreateMemWit$}{$x_k,TXO_k
        		, TXO\_C_{k-1}$}
        	\For{$txo$ in $TXO_k$}
        	 \If{$txo \neq x_k$}
           	 \State $p *= H_{prime}(txo)$
           	 \EndIf
           \EndFor
           \State $w_k(x_k) \gets \left(TXO
            \_C_{k-1}\right)^p$
           \State \textbf{return} $w_k(x_k)$
        \EndProcedure
    \end{algorithmic}
    \vspace{0.3cm}
    $\#$Verify Membership Witness for $x_k$ 
     generated in $B_k$
    \begin{algorithmic}[1]
        \Procedure{$\textit{VerifyMemWit}$}{$w_k(x_k)$, 
            $x_k$, $TXO\_C_k$}
           \State $t \gets H_{prime}(x_k)$
           \State $b$ $\gets$ $\left( \left(w_k(x_k)
             \right)^t == TXO\_C_k  \right)$
            \State \textbf{return} $b$
        \EndProcedure
    \end{algorithmic}
    $\#$Prove non-membership against $STXO\_C_k$ for $x_k$  generated in $B_k$
    \begin{algorithmic}[1]
        \Procedure{$CreateNonMemWit$}{$x_k, STXO_k, STXO\_C_{k-1}$}
          \For{$stxo$ in $STXO_k$}
           	 \State $p *= H_{prime}(stxo)$
           \EndFor
           \State $a$, $b$ $\gets$ $Bezout(H_{prime}
           (x_k),p)$
           \State $d \gets \left(STXO\_C_{k-1}\right)^a$
           \State \textbf{return} $u_k(x_k) \gets (d,b)$
        \EndProcedure
     \end{algorithmic}
     \vspace{0.3cm}
    $\#$Verify Non-Membership Witness $u_k(x_k)$ 
     for $x_k$ generated in $B_k$  
    \begin{algorithmic}[1]
        \Procedure{$\textit{VerifyNonMemWit}$}
        {$STXO\_C_k, STXO\_C_{k-1},$ $x_k, u_k(x_k)$}
           \State $t \gets H_{prime}(x_k)$ and $d$, $b$ $\gets$ $u_k(x_k)$

           \State \textbf{return} $\left(d^t 
             \left(STXO\_C_k\right)^b == STXO
             \_C_{k-1}  \right)$
        \EndProcedure
    \end{algorithmic}
    \end{multicols}
  \end{algorithm*}
\subsubsection{Existence Proof (Membership witness in TXO set)} 
The existence proof of a coin in the TXO set is a membership witness corresponding to the latest TXO commitment. The user generates the proof using  the \textit{CreateMemWit} function and the validator verify it using the \textit{VerifyMemWit} function as shown in Algorithm \ref{alg:mem_nonmem}.\\
\textit{\textbf{CreateMemWit}} \textbf{function:} Suppose a coin $x_k$ is generated at block $B_k$, then $x_k \in$ \textit{TXO set} and accumulated to $TXO\_C_k$. The user constructs the membership witness $w_k(x_k)$ for $x_k \in$ \textit{TXO set} as follows - Firstly, the function collects all the TXOs from $TXO_{k}$ excluding $x_k$ and computes their prime representatives. Then, it computes the product of all these prime representatives, denoted as $p$. Finally, the modular exponentiation of the base $TXO\_C_{k-1}$ with exponent $p$ generates the witness $w_k(x_k)$.\\
\textit{\textbf{VerifyMemWit}} \textbf{function:} Suppose $w_k(x_k)$ is a membership witness of the coin $x_k \in$ \textit{TXO set}. Any validator verifies the membership witness by checking $\left(w_k(x_k)\right)^t \stackrel{?}{=} TXO\_C_k$, where $t = H_{prime}(x_k)$.
\subsubsection{Unspent proof (Non-membership witness in STXO set)}
The unspent proof of a coin is a non-membership witness corresponding to the latest STXO commitment. The functions \textit{CreateNonMemWit} and \textit{NonMemWitVerify} are used to generate and verify proof for a new coin,  as shown in Algorithm \ref{alg:mem_nonmem}.\\
\textit{\textbf{CreateNonMemWit}} \textbf{function:} Suppose a coin $x_k$ is generated at block $B_k$ and $x_k \not \in STXOset$, the user constructs a non-membership witness $u_k(x_k)$ of $x_k$ corresponding to $STXO\_C_k$  as follows. The function collects all the STXOs from $STXO_{k}$ and computes their prime representatives. Then, it calculates the product of all these prime representatives denoted as $p$. Since $gcd(t, p) = 1$, where $t = H_{prime}(x_k)$ the unspent proof $u_k(x_k) = \left(d = \left(STXO\_C_{k-1}\right)^a, b\right)$. Where $a$ and $b$ are Bezout's coefficients of $t$ and $p$ such that $at + bp = 1$.
\textit{\textbf{VerifyNonMemWit}} \textbf{function:} Suppose $u_k(x_k)$ is a non-membership witness of $x_k \not \in$ \textit{STXO set} computed by a user with the STXO set till block $B_k$. Any validator can verify it by checking $d^{t}\left(STXO\_C_k\right)^b \stackrel{?}{=} STXO\_C_{k-1}$ as 
\begin{align*}
d^{t}\left(STXO\_C_k\right)^b &= \left(STXO\_C_{k-1}\right)^{at}\left(STXO\_C_{k-1}\right)^{bp}\\  &= STXO\_C_{k-1} 
\end{align*}
\subsection{Transaction Witness Update}
When a user wants to spend a coin, he must submit existence and unspent proofs to show the coin's validity. The submitted proofs might expire due to a delay in network communication, or the miners did not process the transaction in the latest created blocks. In order to bound proofs with the latest commitments, the user must update and resubmit the transaction and witness until the transaction is processed. To address this problem, we require two essential tools. First, the user needs to specify the height of the block $h$ called witness height, where the proofs and commitments are bounded together. The validator checks the membership and non-membership proofs corresponding to $TXO\_C_h$ and $STXO\_C_h$. Second, each validator needs to store the STXOs of the latest $M$ blocks (STXO cache)  in their database to process the transaction in the subsequent $M$ blocks. If the latest block height is greater than $h$, the validator checks that the transaction inputs are consumed in the future blocks by simply querying the STXO cache to avoid the double spending of the coin.
\begin{algorithm}[!t]
    \caption{Witness Update Algorithm}
        \label{alg:wit_update}
    $\#$Update membership proof from block $k$ to $n$
    \begin{algorithmic}[1]
        \Procedure{$UpMemWit$}{$x_k$, $w_k(x_k)$, 
            $TXO_{k+1:n}$}
        	\For{$txo$ in $TXO_{k+1:n}$}
        	 
           	 \State $p *= H_{prime}(txo)$
           	
           \EndFor
          
           \State \textbf{return} $w_n(x_k) \gets 
            \left(w_k(x_k)\right)^p$
        \EndProcedure
    \end{algorithmic}
    \vspace{0.3cm}

$\#$Update non-membership proof from block $k$ to $n$
    \begin{algorithmic}[1]
        \Procedure{$\textit{UpNonMemWit}$}{$x_k, 
          u_k(x_k), STXO_{k+1:n}$}
        	\For{$stxo$ in $STXO_{k+1:n}$}
           	 \State $p *= H_{prime}(stxo)$
           \EndFor
           \State $a_0$, $b_0$ $\gets$ 
                  $Bezout(H_{prime}(x_k),p)$
           \State $r \gets a_0b$, 
             $d^{'} \gets d\left(STXO\_C_{k}
           \right)^r$
           \State \textbf{return} $u_n(x_k) \gets 
            (d^{'},b_0b)$
        \EndProcedure
     \end{algorithmic}
  \end{algorithm}
The most recent witness updated coins can be further updated by the users themselves using the \textit{UpMemWit} and \textit{UpNonMemWit} functions as shown in Algorithm  \ref{alg:wit_update}. The witness update is independent of the size of the sets \textit{TXO set} and \textit{STXO set}, which makes a user no need to store the TXO and STXO sets with them. \\
\textbf{UpMemWit function:} Let $w_n(x_k)$ be the membership witness of the coin $x_k$ generated in block $B_k$. The user can update  till $B_n$ ($n > k$) using \textit{UpdateMemWit}. Firstly, the function gets all the TXOs from block height 
k$+$1 to $n$, denoted as $TXO_{k+1:n}$ and converts them to prime representatives. Then, it computes the product $p$ of all these prime representatives and returns $w_n(x_k)$ using a single modular exponentiation. To verify the updated witness, one check that $\left(w_n(x_k)\right)^{t}$  $\stackrel{?}{=}$ $TXO\_C_n$ (Where, $t = H_{prime}(x_k)$), which holds as 
\begin{align*}
\left(w_n(x_k)\right)^{t}= \left(w_k(x_k)\right)^{pt} = \left(TXO\_C_k\right)^{p} = TXO\_C_n
\end{align*}
\textbf{UpNonMemWit function:} Let $u_k(x_k)$ $=$ $(d,b)$ be non-membership witness of the coin $x_k$ generated in block $B_k$. The user can update  till $B_n$ ($n > k$) using \textit{UpdateNonMemWit}. Firstly, the function gets all the STXOs from block height k$+$1 to $n$ are denoted as  $STXO_{k+1:n}$ and convert them to prime representatives. Then, it calculates the product of all these prime representatives, denoted as $p$. It calculates Bezout coefficients of $t = H_{prime}(x_k)$ and $p$ as $a_0$ and $b_0$. Finally, the updated non-membership witness $u_n(x_k)$ $=$ $\left(d^{'}, b^{'} \right) $. To verify it, one check that, $d^{'t} \left(STXO\_C_m \right)^{b^{'}} \stackrel{?}{=} STXO\_C_{k-1}$, which holds as,
\begin{align*}
d^{'t} \left(STXO\_C_n \right)^{b^{'}} &= d^t \left(STXO\_C_k \right)^{rt} \left(STXO\_C_n \right)^{b_0b} \\
&= d^t \left(STXO\_C_k \right)^{a_0bt} \left(STXO\_C_k \right)^{b_0bp} \\
&= d^t \left(STXO\_C_k \right)^{b\left(a_0t + b_0p\right))} \\ 
&= d^t\left(STXO\_C_k \right)^{b} = STXO\_C_{k-1}
\end{align*}
\subsection{Security Analysis}
In this section we describe the attacker model for double-spend attack of a transaction and analyse the security of the CompactChain agianst this atatck.
\subsubsection{Attacker Model}
We describe the two attacker models on CompactChain construction as follows - 
\begin{enumerate}
    \item \textbf{Double-spend attack:} Let a coin $x$ is generated in block $B_i$ as an output of one of the transactions in $B_i$, i.e., $x \in T_i.Outputs$. $x$ is odd prime. Now, the coin $x \in$ \textit{TXO set} and $x \not \in$ \textit{STXO set}. The owner of the coin $x$ computes the transaction witness $W_x=\left(w_i(x),u_i(x)\right)$ for $x$ from $TXO_i$ and $STXO_i$ using Algorithm \ref{alg:mem_nonmem}. Let the updated witness after the creation of the block $B_n$ is $W^{'}_x=\left(w_n(x),u_n(x)\right)$. When the user wants to spend the coin $x$ in transaction \textit{tx} with $x$ as one of the \textit{inputs} in \textit{tx}, he submits $W^{'}_x$ along with \textit{tx}. Let $n < h < n+M$ such that $tx \in B_h$. Since $x \in STXO_h$, then $x \in $ \textit{STXO set}. i.e., the coin has been spent in block $B_h$. Suppose, a probabilistic polynomial time (PPT) adversary $\mathcal{A}_1$ wants to double-spend the coin $x$ by including it in \textit{$tx'$} (to double-spend the coin which was already spent in  transaction \textit{tx}). Let $k>h$ be the block height at which $\mathcal{A}_1$ computes the witness for \textit{$tx'$}.
The membership proof $w_k(x)$ is easy to prove since $x \in$ \textit{TXO set} and $TXO\_C_k = (w_k(x))^x$. So, the adversary $\mathcal{A}_1$ needs to prove a non-membership witness $u_k(x)$ for coin $x \in $ \textit{STXO set}.
\item For any arbitrary coin $y \notin$ \textit{TXO set}, a PPT adversary $\mathcal{A}_2$ tries to convince the validators of the CompactChain by providing the witness $W_y = \left(w_k(y), u_k(y)\right)$ at a height $k$.  
\end{enumerate}
In both cases, the adversary to succeed, need to prove the membership and non-membership proofs simultaneously in each of the commitments $STXO\_C$ (case $1$) and $TXO\_C$ (case $2$) which is computationally hard as per the undeniability property of the RSA accumulator as per Section~\ref{sec}.
\begin{theorem}
The CompactChain construction is secure under the strong RSA assumption.
\end{theorem}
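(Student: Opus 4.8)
The plan is to prove security by a reduction: any probabilistic polynomial-time adversary that succeeds in double-spending against CompactChain can be transformed into an adversary that breaks the undeniability property of the trapdoor-less universal RSA accumulator stated in Section~\ref{sec}, which itself holds under the strong RSA assumption. The attacker model isolates exactly two ways to cheat, and in each the adversary must exhibit a valid witness for one membership status while the \emph{true} status of the element is the opposite. Since the honest party can always construct a witness for the true status, the forgery yields simultaneous membership and non-membership witnesses for a single element in a single accumulator, contradicting undeniability.

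For Case~1, the coin $x$ was genuinely spent in block $B_h$, so $x \in$ \textit{STXO set} and $x$ is accumulated in $STXO\_C_k$. A legitimate membership witness $w_x$ with $(w_x)^{H_{prime}(x)} = STXO\_C_k$ therefore exists and can be computed from the honest STXO history. A successful $\mathcal{A}_1$ additionally outputs a non-membership witness $u_k(x) = (d,b)$ passing \textit{VerifyNonMemWit}, i.e. $d^{H_{prime}(x)} (STXO\_C_k)^b = STXO\_C_{k-1}$. The pair $(STXO\_C_k, H_{prime}(x), w_x, u_k(x))$ is then exactly a break of undeniability on the STXO accumulator. Case~2 is symmetric: for $y \notin$ \textit{TXO set} the honest non-membership witness against $TXO\_C_k$ exists, while a successful $\mathcal{A}_2$ supplies the membership witness $w_k(y)$, again producing both witnesses for one element in $TXO\_C_k$. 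In both cases the reduction forwards the group parameters $(\mathbb{G},g,pp)$ to the CompactChain adversary, extracts the forged witness, pairs it with the honestly computed opposite witness, and outputs the pair; its success probability equals that of the CompactChain adversary up to negligible terms.

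The step I expect to be the main obstacle is arguing that the forgery is genuinely \emph{required} once the witness-height and STXO-cache machinery is taken into account, rather than the double-spend being blocked for free. Here I would split Case~1 by the distance $k-h$. If the replayed spend falls within the cache window, i.e. the block carrying $tx'$ is at height at most $h+M$, then the validator detects $x \in STXO_{h+1:k}$ directly from the STXO cache and rejects without invoking any cryptographic assumption. Only when $k$ exceeds the cache horizon can the cache be evaded, but precisely then the commitment $STXO\_C_k$ already accumulates $x$, so \emph{no} honest non-membership witness verifies and the adversary is forced to forge one against a commitment that genuinely contains the element --- which is the undeniability break above. I would also invoke the soundness of \textit{VerifyNI-PoE} (from the Fiat--Shamir NI-PoE of Section~\ref{sec3}) to ensure that the committed accumulators $TXO\_C_k$ and $STXO\_C_k$ the verifier checks against are the \emph{correctly} updated ones, so that the ``true status'' of each coin coincides with its membership in the committed set; without this, a malicious miner could decouple the committed state from the actual transaction history and the reduction's appeal to the honest witness would not go through.

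Finally, I would close the chain by citing the established implication that undeniability of the universal RSA accumulator follows from the strong RSA (and root) assumption, as recorded in Section~\ref{sec} and the works referenced there, so that the composite reduction gives security of CompactChain directly under the strong RSA assumption.
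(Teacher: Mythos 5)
Your proposal is correct and shares the paper's skeleton: in each attack scenario a successful adversary must produce a witness for the opposite of an element's true membership status in a single RSA accumulator, and the reduction pairs that forged witness with the honestly computable witness for the true status. The difference is where you stop. You reduce to the undeniability property of Section~\ref{sec} as a black box and then cite the known implication from the strong RSA assumption, whereas the paper inlines that final step as an explicit root extraction: in Case~1 it forms $Z = d\cdot w^{b}$ from the forged non-membership witness $(d,b)$ and the honest membership witness $w$ of $x$ in $STXO\_C_k$, so that $Z^{x}=STXO\_C_{k-1}$, and in Case~2 it uses Bezout coefficients of $y$ and $x^{*}=\prod_{x\in TXO_k}x$ to obtain $Z^{y}=TXO\_C_{k-1}$, each contradicting strong RSA directly. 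Your modular route is arguably cleaner and makes the dependence on the accumulator's security definition explicit; the paper's route is self-contained and does not presuppose that undeniability has been proved for the exact trapdoor-less variant in use. Your two additional observations---that a replayed spend inside the $M$-block cache window is rejected by a plain STXO-cache lookup rather than by any cryptographic argument, and that soundness of \textit{VerifyNI-PoE} is needed so that the committed accumulators actually reflect the transaction history (otherwise the ``true status'' of a coin and its membership in the committed set could diverge)---are both correct and fill gaps the paper's proof leaves implicit.
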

\begin{proof}
\textbf{Case 1} (double-spend attack): the adversary $\mathcal{A}_1$ returns a witness tuple $W_x=\left(w_k(x),u_k(x)\right)$ for $x \in $ \textit{TXO set} and already spent before height $k$ (i.e., $x$ is already included in the commitment $STXO\_C_{k}$) such that \textit{VerifyMemWit}$ \left(w_k(x),x, TXO\_C_k\right) = 1$ and \textit{VerifyNonMemWit}$ \left(STXO\_C_{k},STXO\_C_{k-1},x,u_k(x)\right) = 1$. So, $\left(d^x . STXO\_C_k\right)^b = STXO\_C_{k-1}$.

We construct an Algorithm  $\mathcal{B}_1$ to break the strong RSA assumption by invoking $\mathcal{A}_1$. $\mathcal{B}_1$ parses $\left(d,b\right) \leftarrow u_k(x)$, $w_k \leftarrow w_k(x)$. $\mathcal{B}_1$ compute $w = \left(STXO\_C_{k-1}\right)^ {\prod_{p \in TXO_k, p \neq x} p}$ and $Z = d.w_k^b$. Then, $Z^x = d^x. w^{xb} = d^x. (STXO\_C_k)^b = STXO\_C_{k-1}$, which contradicts the strong RSA assumption. 

\textbf{Case 2:} The adversary $\mathcal{A}_2$ returns a witness tuple $W_y=\left(w_k(y),u_k(y)\right)$ for $y \notin $ \textit{TXO set} such that \textit{VerifyMemWit}$ \left(w_k(y),y, TXO\_C_k\right) = 1$ and \textit{VerifyNonMemWit}$ \left(STXO\_C_{k},STXO\_C_{k-1},y,u_k(y)\right) = 1$.
So, from these $w_k^y = TXO\_C_k = \left(TXO\_C_{k-1}\right)^{\prod_{x \in TXO_k} x}$.
 
We construct an Algorithm  $\mathcal{B}_2$ to break the strong RSA assumption by invoking $\mathcal{A}_2$. The Algorithm $\mathcal{B}_2$ parses $w_k \leftarrow w_k(y)$ and compute the following - 
\begin{enumerate}
\item Compute $x^{*} = \prod_{x \in TXO_k} x$
\item Compute $a$ and $b$ such that $a.y +b.x^{*} = 1$.
\item Compute $Z = w_k^b . (TXO\_C_{k-1})^a$
\end{enumerate}
Then 
\begin{align*}
Z^y &= w_k^{by}\left(TXO\_C_{k-1}\right)^{ay} \\
&=\left(TXO\_C_{k-1}\right)^{ay + bx^{*}} = TXO\_C_{k-1}
\end{align*}
which contradicts the strong RSA assumption.
\end{proof}
\section{Performance Evaluation}
In this section, we discuss the performance of CompactChain protocol  by comparing it with Boneh's \cite{batching} and Minichain\footnote{$Minichain^{+}$ in \cite{minichain} is considered as Minichain.}  \cite{minichain} protocols. We test the performance of commitments update by miners, transaction witness generation, updation by users. We also test the verification of the NI-PoE proofs for commitments, and transaction witnesses by validators. We also examine the proof size and test performance of the propagation latency of a block in association with transaction throughput. The results shown are averaged over $10$ iterations. 
\begin{table*}[t]
\centering
\caption{Comparison of CompactChain with Boneh and Minichain protocols}\label{table:comparison}
\begin{tabular}{lccc}
\hline
\textbf{ } & \textbf{Boneh} &  \textbf{Minichain} & \textbf{CompactChain} \\
\hline
Accumulator type & RSA(UTXO)  & MMR(TXO) $+$ RSA(STXO)   & RSA(TXO) $+$ RSA(STXO) \\
Commitment Update & $\mathcal{O}(m^2)$  & $\mathcal{O}(m)$ & $\mathcal{O}(m)$ \\ 
Tx Proof generation & $\mathcal{O}(k+d)$ &  $  \mathcal{O}(log(m)) + \mathcal{O}(log(L)) + \mathcal{O}(d)$ & $\mathcal{O}(k+d)$ \\
Tx Proof verification & $\mathcal{O}(1)$ &  $\mathcal{O}(log(m)) + \mathcal{O}(log(L)) + \mathcal{O}(1) $ & $\mathcal{O}(1) $ \\ 
Tx Proof size & $\mathcal{O}(1)$ &  $\mathcal{O}(log(m)) + \mathcal{O}(log(L)) + \mathcal{O}(1) $ & $\mathcal{O}(1) $ \\ 
Tx Proof update & $\mathcal{O}(k+d)$  & $\mathcal{O}(d)$ & $\mathcal{O}(k+d) $ \\ 
\hline 
\end{tabular}
\end{table*}
We have implemented our experiments based on the implementation of \cite{crypto-acc} in C++. We use the RSA modulus of $3072$-bits, $128$-bit prime representative and $256$-bit Merkle root. We run all our experiments for miner, validator, and user tasks on a machine equipped with Intel(R) Core(TM) $i5-8250$U CPU $@$ $1.60$ GHz processor and $8$ GB of RAM. Our experiments are available at \cite{compactChain_exp}.
\subsection{Theoretical Comparisons} 
We discuss the theoretical comparison of CompactChain with Boneh's and Minichain protocols. The comparisons are shown in Table \ref{table:comparison}. All these algorithms depend on the number of coins generated $k$ (\textit{outputs}), the number of coins consumed $d$ (\textit{inputs}), and the number of transactions $m$ in the block.  

\subsection{Commitments update and verification}
In this section, we test the performance of Commitment update and verification discussed in Algorithm \ref{alg:comm_update} and compare with the accumulator updates in Boneh's and Minichain protocols. While creating a new block, the miner must update the commitments due to new TXOs and STXOs.

Fig. \ref{fig:comm_update} shows the comparison of the time consumed for updating the commitments in Boneh's work, Minichain and CompactChain concerning the number of transactions $m$ (we are assuming $k = d$).  Boneh's accumulator update consumes much larger time than Minichain and CompactChain due to the $\mathcal{O}(m^2)$ complexity of the $batchDel$ operation used for accumulator update. For $1000$ transactions, Boneh's accumulator update consumes $\approx 240$ seconds, whereas Minichain and CompactChain took $\approx 1$ second. The Minichain requires one $batchAdd$ operation for $d$ number of spent coins for updating $STXO\_C$. Minichain also updates $TXO\_C$ by adding a new $TMR$ in MMR tree.

In the CompactChain,  two $batchAdd$ operations are required, one each for $STXO\_C$ ($d$ number of spent coins) and $TXO\_C$ ($k$ number of generated coins) updates. The $batchAdd$ depends on the generation of prime representatives for each element to be accumulated. 
While creating a new block, $d$  new elements add to $STXO\_C$, and $k$ elements add to $TXO\_C$.  

To improve the efficiency of our implementation, we performed computation of prime representatives in parallel as each element is independent of others. Then the computation of updating the commitments $TXO\_C$ and $STXO\_C$ are performed in parallel as the set of elements to be added to each commitment are independent. We also exploited parallelism in implementing the Minichain. The slight bias in time taken for commitments update between  CompactChain and Minichain is due to additional computation time for calculating prime representatives of $k$ elements added to $TXO\_C$. In both Minichain and CompactChain, the time to update the commitments increases linearly with the transaction count.
\begin{figure}[t]
  \includegraphics[width=\linewidth, height=0.68\linewidth]{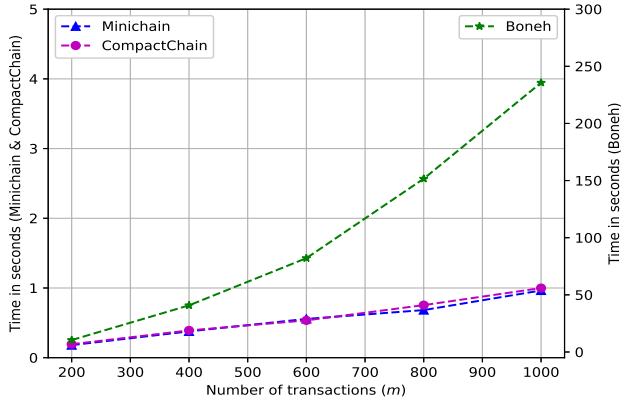}
   \caption{Performance of the commitments update }
\label{fig:comm_update}
\end{figure}
Fig. \ref{fig:comm_ver} shows the time taken by a validator for verifying the commitment updates in Boneh's work, Minichain and CompactChain. The verification depends on the  computation of the prime representatives and the product of the prime representatives. In Minichain, the validator needs to verify a NI-PoE proof for one RSA accumulator, and Merkle proof for MMR peaks update. Whereas  Boneh's work and CompactChain validator verify two NI-PoE proofs.

The validator verifies each NI-PoE proof independently with a constant number of group operations. We have exploited parallelism in commitments verification similar to commitments update. The verification time increases with an increase in transaction count shown Fig. \ref{fig:comm_ver}.
\begin{figure}[t]
  \includegraphics[width=\linewidth, height=0.68\linewidth]{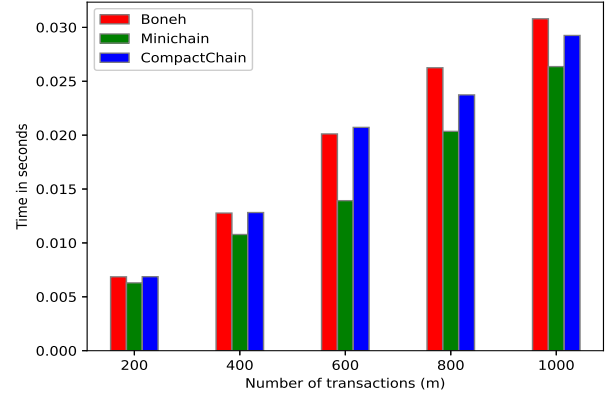}
   \caption{Performance of the commitments verification}
\label{fig:comm_ver}
\end{figure}

\subsection{Transaction Witness generation and verification}
The validator needs to verify the existence and unspent proofs of a transaction in both Minichain and CompactChain. The unspent proof is the same in both, but the existence proof is different in size and structure. In Minichain, the validator needs to verify two Merkle proofs - transaction inclusion proof in TMR, and TMR inclusion proof in the latest TXO commitment. We assume the length of the chain $L = 2^{20}$ and $m = 2^{10}$. In CompactChain, the existence proof is an RSA group element, and the validator verifies it by single modular exponentiation. In Boneh's protocol, the validator verifies a single unspent proof (membership in the UTXO commitment) of the transaction's input in the latest commitment.

Table \ref{table:tx_proof} shows the comparison of the proof size and verification time. The results show that the proof size of the CompactChain has decreased compared with Minichain for the $m$ and $L$ values as mentioned earlier. Since the existence and unspent proofs are independent, the proof verification is performed in parallel. 
\begin{table*}[t]
\begin{center}
\caption{Comparison of Transaction witness size (in bytes) [verification time (in seconds)]}\label{table:tx_proof}
\begin{tabular}{lcccc}
\hline%
& Existence proof & Unspent proof & Verification time per Tx & Verification time for 1000 Txs \\
& (per Tx) & (per Tx) & (parallel execution) & (Parallel execution with 16 threads) \\
\hline
Boneh & -- & $384$ [0.00067]&  0.00067 s & 0.193 s \\
Minichain & $960$ [0.000025] & $400$ [0.0011] & 0.00117 s & 0.306 s\\
CompactChain & $384$ [0.00056] & $400$ [0.0010] & 0.00115 s & 0.303 s\\
\hline
\end{tabular}
\end{center}
\end{table*}

\subsection{Transaction witness update}
In CompactChain, the existence and unspent proofs change with commitments as new blocks are created in the network. We assume that the number of \textit{inputs} and \textit{outputs} are equal and, Fig. \ref{fig:witness_update} shows that the transaction witness update time for Boneh's, Minichain and CompactChain. In Minichain, the time taken for existence proof update is negligible. Boneh's witness update requires two sequential events for batch deletion of \textit{inputs} from the witness, then batch addition of \textit{outputs} to the witness.
In CompactChain, we implemented a witness update similar to the commitments update by exploiting the parallelism.
The time taken by a user to update the witness increases linearly with the transaction count in all three protocols.
\begin{figure}[t]
  \includegraphics[width=\linewidth, height=0.68\linewidth]{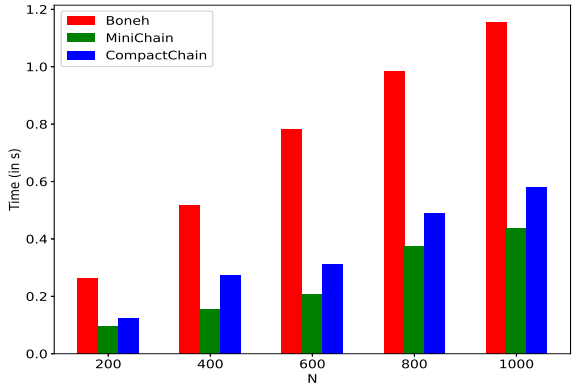}
   \caption{Performance of the transaction witness update}
\label{fig:witness_update}
\end{figure}

\subsection{Memory Consumption}
We compare the memory consumption of existing stateless blockchains against Bitcoin. Comparison is analyzed on three aspects (i) RAM Usage, (ii) Disk Usage, and (iii) Memorypool Usage. The comparison was performed on real-time data of Bitcoin from Jan $2017$ to Jan $2022$. 
\begin{figure}[t]
  \includegraphics[width=0.9\linewidth, height=0.6\linewidth]{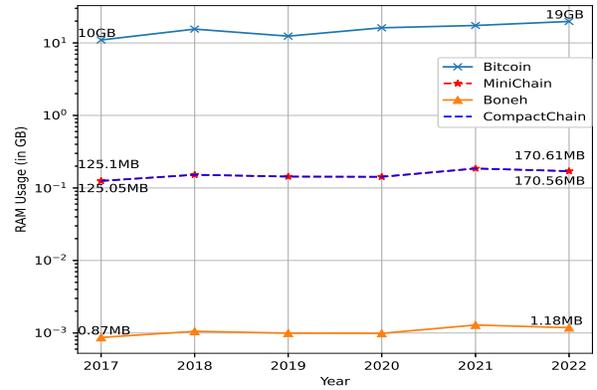}
   \caption{Comparison of RAM Usage}
\label{fig:ram}
\end{figure}
RAM Usage, considers memory usage while storing stxo-cache in the case of CompactChain and MiniChain, and wheras Bitcoin requires the whole UTXO-set. Since Boneh has no cache mechanism we consider only current block. Fig. \ref{fig:ram} shows the RAM usage from $2017$ to $2022$. Bitcoin consumes the highest memory, 19GB due to the ever-growing UTXO set size. CompactChain consumes $170.61$ MB of RAM. MiniChain consumes $0.05$ MB lesser than CompactChain because CompactChain's TXO commitment is $356$ bytes larger than MiniChain's. Boneh consumes the least space due to the lack of caching mechanism, however, underpinned by commitment update speeds. Stateless blockchains can be practically realizable on resource constraint IoT devices as validators due to their low RAM usage.
\begin{figure}[t]
  \includegraphics[width=0.9\linewidth, height=0.6\linewidth]{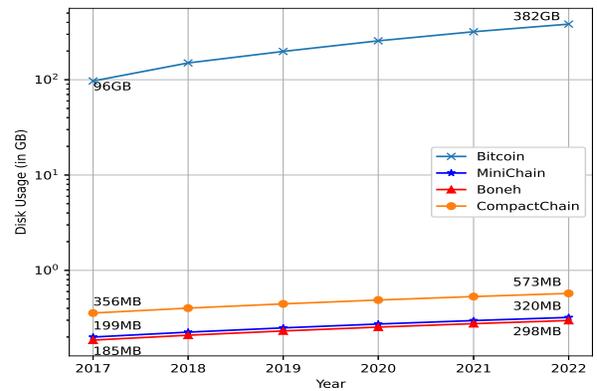}
   \caption{Comparison of Disk Usage}
\label{fig:disk}
\end{figure}
\begin{figure}[t]
\centering
  \includegraphics[width=0.9\linewidth, height=0.6\linewidth]{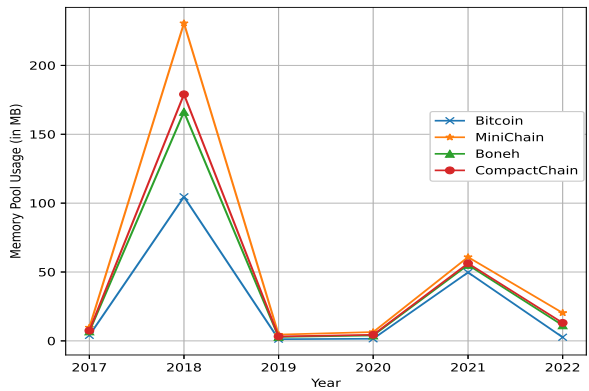}
   \caption{Comparison of Memorypool Usage}
\label{fig:mempool}
\end{figure}
Disk Usage, considers the size required to store the whole blockchain. Fig. \ref{fig:disk} shows the disk usage, as expected Bitcoin consumes the highest because one has to store entire blocks of the blockchain. In contrast, stateless blockchains need to store only block headers, therefore consuming very less space. As pointed out earlier, CompactChain consumes slightly higher header space than MiniChain. Nevertheless, CompactChain consumes $573$ MB only which enables edge devices to act as Validators.
\begin{table*}[ht]
\centering
\caption{Comparison of Maximum TPS}
\begin{center}
\begin{tabular}{lccc}
\hline%
& Boneh & Minichain & CompactChain \\
\hline
Tx Verification latency (in sec) (for 500 Txs) [Max TPS] & 0.193 [2590.67]  & 0.306 [1634] & 0.303 [1650] \\
Commitments Update latency (in sec) [Max TPS] & 235.62 [2.12]& 0.97 [515.46] & 0.99 [505] \\
Consensus latency (in sec) [Max TPS] & 2.08 [240.38] & 3.57 [140] & 3.03 [165]\\
Maximum TPS  & 2.12 & 140 & 165 \\
\hline
\end{tabular}
\label{table:TPS}
\end{center}
\end{table*}

Memory Pool or Mempool\footnote{Mempool is a database which consists of the valid unconfirmed transactions in the network. A high Mempool size indicates more network traffic which will result in longer average confirmation time and higher priority fees. While creating a new block miner or block producer pick the transactions from Mempool.} Usage considers the size of unconfirmed transactions received and corresponding proofs (if any). A validator in the stateless blockchain need to store the proofs for transactions of the Mempool, whereas Bitcoin does not. Fig. \ref{fig:mempool} indicates that Bitcoin least space, however, MiniChain consumes enormously high space due to large MMR proofs compared to constant sized Non-Membership witness of CompactChain. Boneh's implementation has only single proof thereby consuming least space among stateless-blockchains.
\subsection{Propagation latency}
We examine the propagation latency of a block as per the information propagation in the bitcoin network as described in \cite{info}. We have conducted  event-driven simulations \cite{optimal, UL-blockDAG, scalable} for propagation latency. We choose a total of $13000$ \cite{bitnodes} nodes with $10$ miners having hash rate distribution \cite{bitcoin_charts} as per in the bitcoin network. We choose the average upload bandwidth of the nodes as $50$ Mbps \cite{speedtest}.

We assume the average size of a transaction (with single input and single output) as $250$ bytes with a total block size of $0.25$ MB ($1000$ transactions). We use the total block validation time of a validator from Fig. \ref{fig:comm_ver} (commitments verification time) and Table \ref{table:tx_proof} (transactions verification time). The information to be propagated  in the network also consists of the transaction proof with sizes as listed in Table \ref{table:tx_proof}. Fig. \ref{fig:prop_delay} shows the total information propagation from the first observation of a block at the miner to reach all nodes in the network.
\begin{figure}[t]
  \includegraphics[width=\linewidth, height=0.7\linewidth]{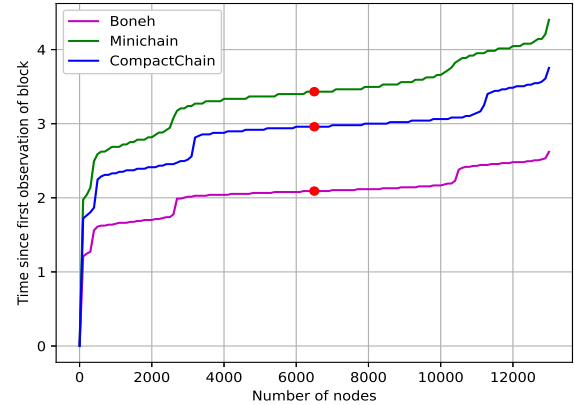}
   \caption{Performance of propagation latency of a block in the network}
\label{fig:prop_delay}
\end{figure}

In Bitcoin \cite{bitcoin}, the consensus is defined in probabilistic terms and a block is valid if at least $50$ percent of the nodes receive it. So, \textit{consensus latency} is defined as time to reach a block to $50$ percent of the nodes in the network. In Fig. \ref{fig:prop_delay}, the dots in red colour indicate the consensus latency in each protocol. 
\subsection{Maximum TPS}
We test the maximum TPS of the system by considering the performances of transaction verification, commitment updates, and consensus latency. We assume each block with $500$ transactions, and each transaction consists of two \textit{inputs} and two \textit{outputs}. Table \ref{table:TPS} shows the comparison of maximum TPS in all  three frameworks. The results show that the maximum TPS in Boneh's work is restricted to $2.12$ TPS due to the complex commitment update.  Although having two RSA accumulators CompactChain has improved maximum TPS over Minichain due to a reduction in the size of the information to be propagated in the network.

\section{Conclusions and Future Research}
\textit{(i) Summary of research and conclusion} - In this paper, we propose a stateless UTXO-model blockchain called compactChain, where the state of the blockchain is comprised of  two commitments similar to Minichain. We introduce the RSA accumulator-based TXO commitment, in contrast to the MMR-based TXO commitment in Minichain. We analyze the security of the system based on the strong RSA assumption.

\textit{(ii) Comparison with Boneh's work and Minichain}- the theoretical and  experimental results show the improvement in the time complexity of the commitment update from $\mathcal{O}(m^2)$ to $\mathcal{O}(m)$ compare with Boneh's stateless blockchain. The CompactChain reduce the transaction proof size from $\mathcal{O}(log(m)) + \mathcal{O}(log(L))+  \mathcal{O}(1)$ to $\mathcal{O}(1)$ compare with Minichain without compromising the system throughput.

\textit{(iii) Practical implications} - CompactChain allows resource constraint devices to become validators due to marginal RAM usage, disk space and less memory pool size, retaining the decentralization property of the blockchain.

\textit{(iv) Paper limitations and future work} -
The future work will focus on an alternate consensus algorithm to the PoW, which is based on the proof of commitments (PoC) by using the number of transactions as a difficulty level. Sharding technology divides a whole Blockchain network into multiple groups and allows participating nodes to process and store the disjoint set of transactions. For securing the sharded-blockchains, the
participating nodes should be reshuffled randomly among the shards, making a node download the state of the newly allocated shard. We focus on integrating the CompactChain with sharded technology to avoid downloading a state for every reshuffle that improves startup time.

\section*{Acknowledgment}
This work is supported by the project \textbf{Indigenous 5G Test Bed (Building an end to end 5G Test Bed) in India}, Dept. of Telecommunication Networks $\&$ Technologies (NT) Cell, Government of India.

\begin{IEEEbiography}
[{\includegraphics[width=1in,height=1.25in,clip,keepaspectratio]{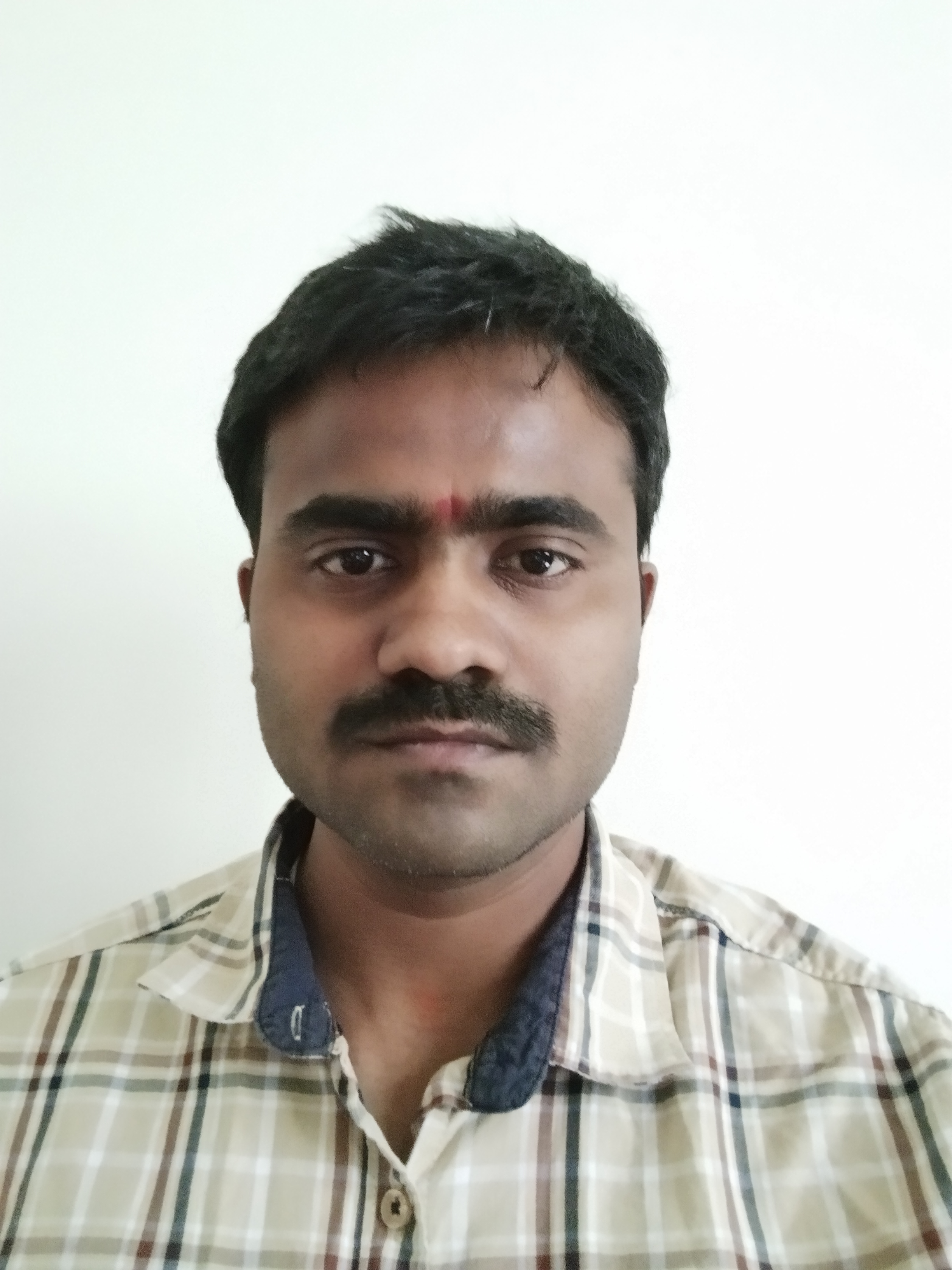}}]{\textbf{B Swaroopa reddy} is currently pursuing the Ph.D degree in the Department of Electrical Engineering, Indian Institute of Technology Hyderabad, India from 2017. He received the B.Tech degree in Electronics and Communication Engineering from Sri Krishna Devaraya University, Anantapur, India, in 2009. From 2010 to 2017, he was with the Bharath Sanchar Nigam Limited (BSNL), a Public Sector Unit under the Government of India. 
His research interests include scalability of Blockchain networks, Privacy and Security in Blockchain, ZK-Rollups, Decentralized Identities (DiD) and Verifiable Credentials (VC).
}
\end{IEEEbiography}

\begin{IEEEbiography}
[{\includegraphics[width=1in,height=1.25in,clip,keepaspectratio]{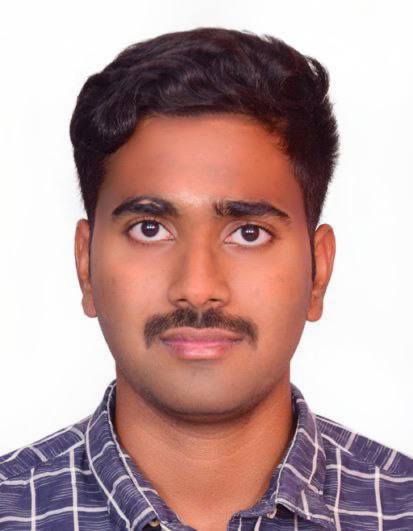}}]{\textbf{T Uday Kiran Reddy} is currently pursuing the B.Tech degree in the Department of Electrical Engineering, Indian Institute of Technology Hyderabad, India from 2019. His current research interests include the areas of Stochastic Optimisation, Signal Processing, Machine Learning, and Blockchain.
}
\end{IEEEbiography}
\end{document}